\title{Generators and Relations for $2$-Qubit Clifford+$T$ Operators}
\author{Xiaoning Bian and Peter Selinger
  \institute{Dalhousie University}}
\begin{document}
\maketitle

\begin{abstract}
  We give a presentation by generators and relations of the group of
  Clifford+$T$ operators on two qubits. The proof relies on an
  application of the Reidemeister-Schreier theorem to an earlier
  result of Greylyn, and has been formally verified in the proof
  assistant Agda.
\end{abstract}

\section{Introduction}

The simplification of Clifford+$T$ circuits is a topic of current
interest in quantum computing {\cite{AM-2019,debeaudrap2020fast,
    de_Beaudrap_2020, HC-2018, Kissinger-Wetering, Nam_2018}}. The
Clifford+$T$ gate set is both universal {\cite{Nielsen-Chuang}} and
convenient for quantum error correction {\cite{Buhrman2006}}, and is
therefore the preferred gate set for fault-tolerant quantum
computing. Generally, in a fault-tolerant regime, applying a Clifford
gate is some orders of magnitude cheaper than applying a $T$-gate, and
therefore, it is sensible to try to simplify circuits so as to
minimize the $T$-count {\cite{AMMR12}}. Many methods for doing so have
been proposed in the recent literature, including methods based on
matroid partitioning {\cite{Amy-Matroid}}, Reed-Muller codes
{\cite{AM-2019}}, and ZX calculus
{\cite{debeaudrap2020fast,de_Beaudrap_2020,Kissinger-Wetering}}. Regardless
of which method is used, the objective is to replace a Clifford+$T$
circuit by a simpler, but equivalent circuit. This requires being able
to tell when two circuits are equivalent. Surprisingly, no complete
set of relations for ancilla-free Clifford+$T$ circuits is currently
known, i.e., there is no known set of relations by which any two
equivalent Clifford+$T$ circuits can be transformed into each other.

In this paper, we give such a complete set of relations for the case
of $2$-qubit Clifford+$T$ circuits. We do this in several steps.
First, a presentation of the group $U_4(\Z[\frac{1}{\sqrt{2}},i])$ of
all unitary $4\times 4$-matrices over the ring
$\Z[\frac{1}{\sqrt{2}},i]$ is known due to the work of Greylyn
{\cite{greylyn2014generators}}.  Second, it is known that the group of
$2$-qubit Clifford+$T$ circuits is exactly the subgroup of this group
consisting of matrices whose determinant is in $\s{\pm 1, \pm i}$
{\cite{GilesSelinger2013}}.  Third, there is a theorem in group theory
called the Reidemeister-Schreier theorem, by which a complete set of
relations for a subgroup can be derived from a complete set of
relations for the supergroup. Fourth, since the resulting relations
are very long and complicated, we simplify them.

The last two steps of this procedure (applying the
Reidemeister-Schreier theorem and simplifying the resulting relations)
require a large amount of algebraic manipulations. Our longest
equational proof has 480 steps, each of which in turns requires a
lemma or rewrite procedure whose proof itself requires many equational
steps. Such proofs would be impossible to verify by hand, and even
verifying them by software is error-prone since it is hard to
guarantee that no unwarranted assumptions were used. For this reason,
we encoded our proof in machine-checkable form, using the proof
assistant Agda {\cite{Agda}}.

The rest of this paper is organized as follows. In
Section~\ref{sec:statement}, we state our main
result. Section~\ref{sec:outline} gives a brief overview of the
proof. In Section~\ref{sec:background}, we present the required
background material, including Greylyn's presentation of
$U_4(\Z[\frac{1}{\sqrt{2}},i])$, the Reidemeister-Schreier theorem,
and the Pauli rotation representation, which is an important tool for
manipulating Clifford+$T$ circuits. We also briefly describe our
reasons for formalizing our proof in a proof
assistant. Section~\ref{sec:proof} describes our formal proof of the
main result. In Section~\ref{sec:axioms}, we briefly discuss the
meaning of the Clifford+$T$ relations, and especially of the three
``non-obvious'' relations. Section~\ref{sec:conclusion} contains some
concluding remarks and ideas for future work.

\section{Statement of the main result}
\label{sec:statement}

Recall that the set of Clifford operators is generated by the operators
\begin{equation*}\label{eqn-generators}
\omega = e^{i\pi/4},\quad
H = \frac{1}{\sqrt{2}}\zmatrix{cc}{1&1\\1&-1},\quad
S = \zmatrix{cc}{1&0\\0&i},\quad
\displaystyle \CZ =
\zmatrix{cccc}{1&0&0&0\\0&1&0&0\\0&0&1&0\\0&0&0&-1},
\end{equation*}
and is closed under multiplication and tensor product. Every such
operator $U$ is of size $2^n\times 2^n$ for some natural number $n$, and as
usual, we say that $U$ is an operator on $n$ qubits. We write
$\Clifford(n)$ for the group of $n$-qubit Clifford operators. It is
well-known that this group is finite for any given $n$ {\cite{selinger2013generators}}, and
therefore not universal for quantum computing. We obtain a universal 
gate set by also adding the $T$-gate as a generator.
\begin{equation*}
T = \zmatrix{cc}{1&0\\0&\omega},\quad
\end{equation*}
The resulting operators are called the Clifford+$T$ operators, and we write
$\CliffordT(n)$ for the $n$-qubit Clif\-ford+$T$ group.

In this paper, we focus on the case $n=2$. Our goal is to give a
complete presentation of the $2$-qubit Clifford+$T$ group in terms of
generators and relations. To ensure that all of our generators are
$4\times 4$-matrices, we introduce the following notation: we write
$T_0=T\otimes I$ and $T_1=I\otimes T$, and similarly for $H_0$, $H_1$,
$S_0$, and $S_1$.  We also identify the scalar $\omega$ with the
$4\times 4$-matrix $\omega I$.  Our main result is the following:

\begin{theorem}\label{thm:main}
  The $2$-qubit Clifford+$T$ group is presented by $(\Xx,\Ss)$, where
  the set of generators is
  \[
    \Xx=\s{\omega,H_0,H_1,S_0,S_1,T_0,T_1,\CZ},
  \]
  and the set of
  relations $\Ss$ is shown in Figure~\ref{fig-relations}.
\end{theorem}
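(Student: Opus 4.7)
The plan is to chain together the three ingredients assembled in the introduction. Start with Greylyn's finite presentation $(G,R)$ of $U_4(\Z[\frac{1}{\sqrt{2}},i])$. Observe that the determinant of any element of $U_4(\Z[\frac{1}{\sqrt{2}},i])$ is a unit of absolute value $1$ in $\Z[\frac{1}{\sqrt{2}},i]$ and hence a power of $\omega$, so $\det$ is a homomorphism to $\gen{\omega}\cong\Z/8$. The set $\s{\pm 1,\pm i}$ is exactly the unique index-$2$ subgroup of $\gen{\omega}$, so by the determinantal characterization due to Giles--Selinger the group $\CliffordT(2)$ sits inside $U_4(\Z[\frac{1}{\sqrt{2}},i])$ as an index-$2$ subgroup. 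I would therefore pick an explicit Schreier transversal consisting of the identity and one fixed element $g$ whose determinant is $\omega$ (or any odd power of $\omega$); a convenient choice is a Greylyn generator like $\text{diag}(1,1,1,\omega)$.

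Next I would apply the Reidemeister--Schreier theorem mechanically. Each pair (coset representative, generator in $G$) yields a Schreier generator of $\CliffordT(2)$, and each relation in $R$, rewritten coset by coset, yields a Schreier relation. This gives a complete but very large presentation $(G',R')$ of $\CliffordT(2)$. I would then use Tietze transformations in two directions. First, show that each Schreier generator in $G'$ can be expressed as a word over $\Xx=\s{\omega,H_0,H_1,S_0,S_1,T_0,T_1,\CZ}$, allowing those generators to be eliminated. Second, derive each Schreier relation in $R'$ from the candidate set $\Ss$ of Figure~\ref{fig-relations}; conversely, verify each axiom of $\Ss$ directly as a matrix identity over $\Z[\frac{1}{\sqrt{2}},i]$ to confirm that it holds in $\CliffordT(2)$ and is therefore a valid candidate.

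The main obstacle will be the simplification step. The Schreier relations produced mechanically are extremely long, and reducing them to the compact set $\Ss$ requires massive amounts of rewriting; the introduction quotes $480$-step derivations, each resting on intermediate lemmas about Pauli rotations (Section~\ref{sec:background}) and the ``non-obvious'' relations discussed in Section~\ref{sec:axioms}. Controlling that volume of equational reasoning by hand is essentially hopeless, which is why the derivation must be carried out in a proof assistant such as Agda: the bulk of the formal work is producing a machine-verified rewriting of Schreier words into $\Xx$-words, together with a machine-verified derivation of every Schreier relation from $\Ss$.
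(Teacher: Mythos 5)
Your proposal is correct and follows essentially the same route as the paper: Greylyn's presentation of $U_4(\Z[\frac{1}{\sqrt{2}},i])$, the Giles--Selinger determinant characterization making $\CliffordT(2)$ an index-2 subgroup with a two-element transversal, the Reidemeister--Schreier theorem, and a massive Agda-verified simplification of the resulting relations. The only cosmetic difference is that the paper packages your two Tietze-transformation steps as hypotheses (a) and (b) of a monoid-level Reidemeister--Schreier theorem via explicit translation functions $f$ and $h$, and establishes validity of $\Ss$ by a soundness theorem ($w\simS v$ implies $f^*(w)\simR f^*(v)$) rather than by direct matrix computation.
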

\begin{figure}
  \newcounter{mytmpcounter}
  \setcounter{mytmpcounter}{\value{equation}}
  \setcounter{equation}{0}
  \makeatletter
  \renewcommand{\theequation}{C\@arabic\c@equation}
  \makeatother
  \def\scale{0.45}
  \def\spacing{-1ex}
  \def\xpacing{\vspace{-1ex}}
  
  (a) Monoidal relations:\xpacing
  \begin{eqnarray}
    \omega A &=& A \omega, \quad \mbox{where $A\in\s{H_i,S_i,T_i,\CZ}$}\label{rel-c1}\\
    A_0B_1 &=& B_1A_0,\quad \mbox{where $A,B\in\s{H,S,T}$}
  \end{eqnarray}

  (b) Order of Clifford group elements:\xpacing
  \begin{eqnarray}
    \omega^8 &=& \epsilon\\
    H_i^2 &=& \epsilon\\
    S_i^4 &=& \epsilon\\
    (S_iH_i)^3 &=& \omega \\
    \CZ^2 &=& \epsilon
  \end{eqnarray}
 
  (c) Remaining Clifford relations:\xpacing
  \begin{eqnarray}
    \m{\begin{qcircuit}[scale=\scale]
        \grid{3.5}{0,1}
        \gate{$S$}{1.25,1}
        \controlled{\dotgate}{2.5,0}{1}
      \end{qcircuit}
    } 
    &=& 
    \m{\begin{qcircuit}[scale=\scale]
        \grid{3.5}{0,1}
        \controlled{\dotgate}{1,0}{1}
        \gate{$S$}{2.25,1}
      \end{qcircuit}
    }\label{rel-c8}
    \\\nonumber\\[\spacing]
    \m{\begin{qcircuit}[scale=\scale]
        \grid{3.5}{0,1}
        \gate{$S$}{1.25,0}
        \controlled{\dotgate}{2.5,0}{1}
      \end{qcircuit}
    } 
    &=& 
    \m{\begin{qcircuit}[scale=\scale]
        \grid{3.5}{0,1}
        \controlled{\dotgate}{1,0}{1}
        \gate{$S$}{2.25,0}
      \end{qcircuit}
    }
    \\\nonumber\\[\spacing]
    \m{\begin{qcircuit}[scale=\scale]
        \grid{8}{0,1}
        \gate{$H$}{1.25,1}
        \gate{$S$}{2.75,1}
        \gate{$S$}{4.25,1}
        \gate{$H$}{5.75,1}
        \controlled{\dotgate}{7,0}{1}
        \invisiblegate{1,0}
      \end{qcircuit}
    } 
    &=& 
    \m{\begin{qcircuit}[scale=\scale]
        \grid{8}{0,1}
        \controlled{\dotgate}{1,0}{1}
        \gate{$S$}{2.25,0}
        \gate{$S$}{3.75,0}
        \gate{$H$}{2.25,1}
        \gate{$S$}{3.75,1}
        \gate{$S$}{5.25,1}
        \gate{$H$}{6.75,1}
      \end{qcircuit}
    }
    \\\nonumber\\[\spacing]
    \m{\begin{qcircuit}[scale=\scale]
        \grid{8}{0,1}
        \gate{$H$}{1.25,0}
        \gate{$S$}{2.75,0}
        \gate{$S$}{4.25,0}
        \gate{$H$}{5.75,0}
        \controlled{\dotgate}{7,0}{1}
        \invisiblegate{1,1}
      \end{qcircuit}
    } 
    &=& 
    \m{\begin{qcircuit}[scale=\scale]
        \grid{8}{0,1}
        \controlled{\dotgate}{1,0}{1}
        \gate{$S$}{2.25,1}
        \gate{$S$}{3.75,1}
        \gate{$H$}{2.25,0}
        \gate{$S$}{3.75,0}
        \gate{$S$}{5.25,0}
        \gate{$H$}{6.75,0}
      \end{qcircuit}
    }
    \\\nonumber\\[\spacing]
    \m{\begin{qcircuit}[scale=\scale]
        \grid{4.5}{0,1}
        \controlled{\dotgate}{1,0}{1}
        \gate{$H$}{2.25,1}
        \controlled{\dotgate}{3.5,0}{1}
        \invisiblegate{1,0}
      \end{qcircuit}
    }
    &=& 
    \m{\begin{qcircuit}[scale=\scale]
        \gridx{1.5}{11}{0,1}
        \gate{$S$}{2.75,1}
        \gate{$H$}{4.25,1}
        \controlled{\dotgate}{5.5,0}{1}
        \gate{$S$}{6.75,0}
        \gate{$S$}{6.75,1}
        \gate{$H$}{8.25,1}
        \gate{$S$}{9.75,1}
      \end{qcircuit}
    }\cdot\omega\inv
    \\\nonumber\\[\spacing]
    \m{\begin{qcircuit}[scale=\scale]
        \grid{4.5}{0,1}
        \controlled{\dotgate}{1,0}{1}
        \gate{$H$}{2.25,0}
        \controlled{\dotgate}{3.5,0}{1}
        \invisiblegate{1,1}
      \end{qcircuit}
    }
    &=& 
    \m{\begin{qcircuit}[scale=\scale]
        \gridx{1.5}{11}{0,1}
        \gate{$S$}{2.75,0}
        \gate{$H$}{4.25,0}
        \controlled{\dotgate}{5.5,0}{1}
        \gate{$S$}{6.75,1}
        \gate{$S$}{6.75,0}
        \gate{$H$}{8.25,0}
        \gate{$S$}{9.75,0}
      \end{qcircuit}
    }\cdot\omega\inv
  \end{eqnarray}

  (d) ``Obvious'' relations involving $T$:\xpacing
  \begin{eqnarray}
    T_i^2 &=& S_i \label{rel-c14}\\
    (T_iH_iS_iS_iH_i)^2 &=& \omega\label{rel-c15}
    \\\nonumber\\[\spacing]
    \m{\begin{qcircuit}[scale=\scale]
        \grid{3.5}{0,1}
        \gate{$T$}{1.25,1}
        \controlled{\dotgate}{2.5,0}{1}
      \end{qcircuit}
    } 
    &=& 
    \m{\begin{qcircuit}[scale=\scale]
        \grid{3.5}{0,1}
        \controlled{\dotgate}{1,0}{1}
        \gate{$T$}{2.25,1}
      \end{qcircuit}
    }
    \label{rel-c16}
    \\\nonumber\\[\spacing]
    \m{\begin{qcircuit}[scale=\scale]
        \grid{9}{0,1}
        \gate{$H$}{1.25,0}
        \controlled{\dotgate}{2.5,0}{1}
        \gate{$H$}{3.75,0}
        \gate{$H$}{3.75,1}
        \controlled{\dotgate}{5.0,0}{1}
        \gate{$H$}{6.25,1}
        \gate{$T$}{7.75,1}
      \end{qcircuit}
    }
    &=&
    \m{\begin{qcircuit}[scale=\scale]
        \gridx{-1.5}{7.5}{0,1}
        \gate{$T$}{-0.25,0}
        \gate{$H$}{1.25,0}
        \controlled{\dotgate}{2.5,0}{1}
        \gate{$H$}{3.75,0}
        \gate{$H$}{3.75,1}
        \controlled{\dotgate}{5.0,0}{1}
        \gate{$H$}{6.25,1}
      \end{qcircuit}
    }
    \qquad
    \label{rel-c17}
  \end{eqnarray}

  (e) ``Non-obvious'' relations involving $T$:
  \begin{eqnarray}
    \scalebox{1}{\m{
        \begin{qcircuit}[scale=\scale]
          \grid{12.00}{0,1}
          \controlled{\notgate}{1,0}{1};
          \gate{$T$}{2.25,0};
          \gate{$H$}{3.75,0};
          \gate{$T\da$}{5.25,0};
          \whitecontrolled{\notgate}{6.5,0}{1};
          \gate{$T$}{7.75,0};
          \gate{$H$}{9.25,0};
          \gate{$T\da$}{10.75,0};
        \end{qcircuit}
    }}
    &=&
    \scalebox{1}{\m{
        \begin{qcircuit}[scale=\scale]
          \grid{12.00}{0,1}
          \gate{$T$}{1.25,0};
          \gate{$H$}{2.75,0};
          \gate{$T\da$}{4.25,0};
          \whitecontrolled{\notgate}{5.50,0}{1};
          \gate{$T$}{6.75,0};
          \gate{$H$}{8.25,0};
          \gate{$T\da$}{9.75,0};
          \controlled{\notgate}{11,0}{1};
        \end{qcircuit}
    }}
    \label{eqn-rela}
    \\\nonumber\\[\spacing]
    \scalebox{1}{\m{
 	\begin{qcircuit}[scale=\scale]
          \grid{18.00}{0,1}
          \controlled{\notgate}{1.00,0}{1};
          \gate{$T$}{2.25,0};
          \gate{$H$}{3.75,0};
          \gate{$T$}{5.25,0};
          \gate{$H$}{6.75,0};
          \gate{$T\da$}{8.25,0};
          \whitecontrolled{\notgate}{9.50,0}{1};
          \gate{$T$}{10.75,0};
          \gate{$H$}{12.25,0};
          \gate{$T^\dagger$}{13.75,0};
          \gate{$H$}{15.25,0};
          \gate{$T^\dagger$}{16.75,0};
        \end{qcircuit}
    }}
    &=&
    \scalebox{1}{\m{
 	\begin{qcircuit}[scale=\scale]
          \grid{18.00}{0,1}
          \gate{$T$}{1.25,0};
          \gate{$H$}{2.75,0};
          \gate{$T$}{4.25,0};
          \gate{$H$}{5.75,0};
          \gate{$T\da$}{7.25,0};
          \whitecontrolled{\notgate}{8.5,0}{1};
          \gate{$T$}{9.75,0};
          \gate{$H$}{11.25,0};
          \gate{$T^\dagger$}{12.75,0};
          \gate{$H$}{14.25,0};
          \gate{$T^\dagger$}{15.75,0};
          \controlled{\notgate}{17,0}{1};
        \end{qcircuit}
    }}
    \qquad
    \label{eqn-relb}
    \\\nonumber\\[\spacing]
    \scalebox{1}{\m{
        \begin{qcircuit}[scale=\scale]
          \grid{10.00}{0,1}
          \whitecontrolled{\gate{$H$}}{1.25,0}{1};
          \gate{$T$}{2.75,0};
          \controlled{\gate{$H$}}{4.25,0}{1};
          \whitecontrolled{\gate{$H$}}{5.75,1}{0};
          \gate{$T$}{7.25,1};
          \controlled{\gate{$H$}}{8.75,1}{0};
        \end{qcircuit}
    }} &=& \scalebox{1}{\m{
        \begin{qcircuit}[scale=\scale]
          \grid{10.00}{0,1}
          \whitecontrolled{\gate{$H$}}{1.25,1}{0};
          \gate{$T$}{2.75,1};
          \controlled{\gate{$H$}}{4.25,1}{0};
          \whitecontrolled{\gate{$H$}}{5.75,0}{1};
          \gate{$T$}{7.25,0};
          \controlled{\gate{$H$}}{8.75,0}{1};
        \end{qcircuit}
    }}\label{eqn-relc}\label{rel-c20}
  \end{eqnarray}
  \caption{Relations for 2-qubit Clifford+$T$ operators. Here
    $i\in\s{0,1}$.}\label{fig-relations}

  \setcounter{equation}{\value{mytmpcounter}}
\end{figure}

In Figure~\ref{fig-relations}, we have used circuit notation to
express some of the relations; for example, we have written
\[
\m{\begin{qcircuit}[scale=0.5]
    \grid{2.5}{0,1}
    \invisiblegate{1.25,0}
    \gate{$T$}{1.25,1}
  \end{qcircuit}
},
\quad
\m{\begin{qcircuit}[scale=0.5]
    \grid{2.5}{0,1}
    \invisiblegate{1.25,1}
    \gate{$T$}{1.25,0}
  \end{qcircuit}
},
\quad\mbox{and}\quad
\m{\begin{qcircuit}[scale=0.5]
    \grid{2}{0,1}
    \controlled{\dotgate}{1,0}{1}
  \end{qcircuit}
}
\]
for $T_0$, $T_1$, and $\CZ$, respectively. Note that the qubits are
numbered from top to bottom. We write circuits in the same order as
matrix multiplication. Moreover, in relations
{\eqref{eqn-rela}--\eqref{eqn-relc}}, we have used a number of
abbreviations; these are defined in Figure~\ref{fig-abbreviations}.
The empty word is denoted $\epsilon$.

\begin{figure}
  \[
  \begin{array}{rcl}
    T\da &=& T^7
    \\
    S\da &=& S^3
    \\\nonumber\\[0ex]
    \scalebox{1}{\m{
        \begin{qcircuit}[scale=0.5]
          \grid{2}{0,1}
          \controlled{\notgate}{1,0}{1};
          \invisiblegate{1,0}
        \end{qcircuit}
    }} &=& \scalebox{1}{\m{
        \begin{qcircuit}[scale=0.5]
          \grid{5.00}{0,1}
          \gate{$H$}{1.25,0};
          \controlled{\dotgate}{2.5,1}{0};
          \gate{$H$}{3.75,0};
        \end{qcircuit}
    }}
    \\\nonumber\\[0ex]
    \scalebox{1}{\m{
        \begin{qcircuit}[scale=0.5]
          \grid{2}{0,1}
          \whitecontrolled{\notgate}{1,0}{1};
          \invisiblegate{1,0}
        \end{qcircuit}
    }} &=& \scalebox{1}{\m{
        \begin{qcircuit}[scale=0.5]
          \grid{8.00}{0,1}
          \controlled{\notgate}{1,0}{1};
          \gate{$H$}{2.25,0};
          \gate{$S$}{3.75,0};
          \gate{$S$}{5.25,0};
          \gate{$H$}{6.75,0};
        \end{qcircuit}
    }}
    \\\nonumber\\[0ex]
    \scalebox{1}{\m{
        \begin{qcircuit}[scale=0.5]
          \grid{2}{0,1}
          \controlled{\gate{$H$}}{1,0}{1};
          \invisiblegate{1,0}
        \end{qcircuit}
    }} &=& \scalebox{1}{\m{
        \begin{qcircuit}[scale=0.5]
          \grid{11.25}{0,1}
          \gate{$S$}{1.25,0};
          \gate{$H$}{2.75,0};
          \gate{$T$}{4.25,0};
          \controlled{\notgate}{5.5,0}{1};
          \gate{$T\da$}{7,0};
          \gate{$H$}{8.5,0};
          \gate{$S\da$}{10,0};
        \end{qcircuit}
    }}
    \\\nonumber\\[0ex]
    \scalebox{1}{\m{
        \begin{qcircuit}[scale=0.5]
          \grid{2}{0,1}
          \whitecontrolled{\gate{$H$}}{1,0}{1};
          \invisiblegate{1,0}
        \end{qcircuit}
    }} &=& \scalebox{1}{\m{
        \begin{qcircuit}[scale=0.5]
          \grid{4.0}{0,1}
          \controlled{\gate{$H$}}{1.25,0}{1};
          \gate{$H$}{2.75,0};
        \end{qcircuit}
    }}
  \end{array}
  \]
  \caption{Abbreviations used in circuit notations}
  \label{fig-abbreviations}
\end{figure}

\section{Proof outline}\label{sec:outline}

In a nutshell, the proof can be described in a few sentences. It
proceeds as follows. Let $R=\Z[\frac{1}{\sqrt{2}},i]$ be the smallest
subring of the complex numbers containing $\frac{1}{\sqrt{2}}$ and
$i$, and let $G=U_4(R)$ be the group of unitary $4\times 4$-matrices
with entries in $R$. Then it is clear that $\CliffordT(2)$ is a
subgroup of $G$, because all of its generators belong to
$G$. Moreover, from {\cite{GilesSelinger2013}}, it is known that
$\CliffordT(2)$ is precisely equal to the subgroup of $G$ consisting
of matrices whose determinant is a power of $i$.  A presentation of
$G$ by generators and relations was given by Greylyn
{\cite{greylyn2014generators}}. There is a general procedure, called
the Reidemeister-Schreier procedure
{\cite{Reidemeister1927,Schreier1927}}, for finding generators and
relations of a subgroup, given generators and relations of the
supergroup. Applying this procedure therefore yields a complete set of
relations for $\CliffordT(2)$.

While in principle, the above proof outline suffices to prove
Theorem~\ref{thm:main}, in practice there is a large amount of
non-trivial work involved in generating and simplifying the actual
relations. For this reason, we have formalized Theorem~\ref{thm:main}
and its proof in the proof assistant Agda. This allows the proof to be
independently checked without too much manual work.

\section{Background}
\label{sec:background}

\subsection{Presentation of \texorpdfstring{$U_4(\Z[\frac{1}{\sqrt{2}},i])$}{U₄(ℤ[½, i])}}

As usual, $\Z$ is the ring of integers. Let
$R=\Z[\frac{1}{\sqrt{2}},i]$ be the smallest subring of the complex
numbers containing $\frac{1}{\sqrt{2}}$ and $i$.  Let
$\omega=e^{i\pi/4}$ be an 8th root of unity, and note that
$\omega = \frac{1+i}{\sqrt{2}}\in R$.  As before, $U_4(R)$ is the
group of unitary $4\times 4$-matrices with entries in $R$.

Greylyn {\cite{greylyn2014generators}} gave a presentation of $U_4(R)$
by generators and relations. His generators are $\omega_{[j]}$,
$X_{[j,k]}$, and $H_{[j,k]}$, where $j,k\in\s{0,...,3}$ and $j<k$.
The relations are shown in Figure~\ref{fig:greylyn-relations}.  The
intended interpretation of the generators is as 1- and 2-level
matrices; specifically, $\omega_{[j]}$ is like the identity matrix,
except with $\omega$ in the $j$th row and column, and $X_{[j,k]}$ and
$H_{[j,k]}$ are like identity matrices, except with the entries of
$X$, respectively $H$, in the $j$th and $k$th rows and columns, like
this:
\[
\def\scale{0.95}
\def\xskip{\hspace{0.3em}}  
\omega_{[\jay]} \xskip=\xskip
\scalebox{\scale}{$\kbordermatrix{
  & \cdots & \jay & \cdots \\
  \svdots & I & 0 & 0 \\
  \jay & 0 & \omega & 0  \\
  \svdots & 0 & 0 & I \\
}$},
\quad
X_{[j,k]}\xskip=\xskip
\scalebox{\scale}{$\kbordermatrix{
  &   \dots & j & \dots & k & \dots \\
  \svdots &   I & 0 & 0 & 0 & 0 \\
  j &   0 & 0 & 0 & 1 & 0 \\
  \svdots &   0 & 0 & I & 0 & 0 \\
  k &   0 & 1 & 0 & 0 & 0 \\
  \svdots &   0 & 0 & 0 & 0 & I
}$},
\quad
H_{[j,k]}\xskip=\xskip
\scalebox{\scale}{$\kbordermatrix{
  &   \dots & j & \dots & k & \dots \\
  \svdots &   I & 0 & 0 & 0 & 0 \\
  j &   0 & \frac{1}{\sqrt{2}} & 0 & \frac{1}{\sqrt{2}} & 0 \\
  \svdots &   0 & 0 & I & 0 & 0 \\
  k &   0 & \frac{1}{\sqrt{2}} & 0 & -\frac{1}{\sqrt{2}} & 0 \\
  \svdots &   0 & 0 & 0 & 0 & I
}$}.
\]
Note that we index rows and columns of matrices starting from 0,
whereas Greylyn indexed them starting from 1. Greylyn's result is the
following:

\begin{theorem}[Greylyn \cite{greylyn2014generators}]\label{thm:greylyn}
  A presentation of the group $U_4(R)$ is given by $(\Yy,\Rr)$,
  where the set of generators is
  $\Yy=\s{\omega_{[j]}, X_{[j,k]}, H_{[j,k]} \mid
    \mbox{$j,k\in\s{1,...,4}$ and $j<k$}}$, and the set of relations
  $\Rr$ is shown in Figure~\ref{fig:greylyn-relations}.
\end{theorem}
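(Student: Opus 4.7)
The plan is to prove both that the given generators suffice to represent every element of $U_4(R)$, and that the given relations $\Rr$ suffice to identify any two words representing the same matrix. The key technical device, standard in exact synthesis over $R=\Z[\frac{1}{\sqrt{2}},i]$, is the notion of \emph{least denominator exponent}: for any $U\in U_4(R)$, let $k(U)$ be the smallest nonnegative integer such that $\sqrt{2}^{\,k(U)}U$ has entries in $\Z[i]$. Unitarity puts strong constraints on columns of such scaled matrices modulo $(1+i)$, and this is what drives the reduction.

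First I would handle \emph{generation}. Given $U\in U_4(R)$, I would argue by induction on $k(U)$ that $U$ can be written as a product of generators. The base case $k(U)=0$ forces $U$ to be a monomial matrix whose nonzero entries are powers of $\omega$ (since these are the units of $\Z[i]\cap U_1(\mathbb{C})$ available in $R$), and such matrices are visibly generated by the $\omega_{[\jay]}$ and (conjugates of) $X_{[\jay,\kay]}$. For the inductive step, I would use left multiplication by a well-chosen $H_{[\jay,\kay]}$ to strictly decrease $k(U)$, mimicking the column-reduction procedure for unitaries over $R$ that appears in Giles--Selinger-style synthesis. Combined, this gives an explicit algorithm writing every $U$ as a word in $\Yy$.

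Second, for \emph{completeness}, the strategy is to define a set of normal-form words $\mathcal{N}\seq\Yy^{*}$ such that (i) every word in $\Yy^{*}$ can be rewritten to a word in $\mathcal{N}$ using only the relations of $\Rr$, and (ii) the evaluation map $\mathcal{N}\to U_4(R)$ is injective. Given (i) and (ii), any two words $w_1,w_2$ evaluating to the same matrix reduce under $\Rr$ to normal forms that evaluate to the same matrix, hence to the same normal form, hence $w_1\simR w_2$. I would design the normal form to mirror the synthesis algorithm above: a canonical record of the choices (which $H_{[\jay,\kay]}$ was applied, and in what order) produced at each denominator-reducing step, followed by a canonical representation of the residual monomial matrix. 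Injectivity of the evaluation on $\mathcal{N}$ follows because the algorithm is deterministic and the monomial-matrix normal form is visibly injective.

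The hard part will be (i): verifying that $\Rr$ is rich enough to push an arbitrary product of generators into this normal form. Concretely, one must show that every ``local'' rearrangement required by the reduction algorithm --- commuting disjoint $X_{[\jay,\kay]}$'s and $H_{[\jay,\kay]}$'s, normalizing the $\omega_{[\jay]}$ scalars past the two-level gates, resolving products like $H_{[\jay,\kay]}X_{[\jay,\kay']}H_{[\jay,\kay]}$, and reducing the various 3- and 4-index interactions --- is derivable from the relations in Figure~\ref{fig:greylyn-relations}. In practice this amounts to a long but finite case analysis: enumerate the possible index overlaps between adjacent generators and exhibit, for each case, a $\Rr$-derivation of the required commutation or simplification. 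This is exactly the sort of bookkeeping that motivates formalization in a proof assistant, but once completed it closes the proof.
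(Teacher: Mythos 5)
This theorem is not proved in the paper at all: it is Greylyn's result, imported by citation, and the paper's own contribution begins only downstream of it (the Reidemeister--Schreier step and the simplification of the resulting relations). So there is no in-paper proof to compare against; the right benchmark is Greylyn's thesis, and your sketch follows essentially the strategy used there: an exact-synthesis argument by induction on the least denominator exponent to establish generation, followed by a normal-form argument (normal forms recording the deterministic run of the synthesis algorithm, terminating in a canonical monomial matrix) to establish completeness, with the bulk of the work being the finite case analysis showing that every local rearrangement needed to reach normal form is derivable from $\Rr$.

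Two details in your sketch are glossed over in a way that would need repair in a full write-up. First, in the base case $k(U)=0$ the entries lie in $\Z[i]$, so the nonzero entries of the resulting monomial matrix are units of $\Z[i]$ on the unit circle, i.e.\ powers of $i=\omega^2$, not arbitrary powers of $\omega$; this is harmless (they are still generated by $\omega_{[\jay]}$ and the $X_{[\jay,\kay]}$), but your characterization of the base case is slightly off as stated. Second, a single well-chosen $H_{[\jay,\kay]}$ does not in general strictly decrease $k(U)$: one must first pair up the entries of a column whose residues modulo $1+i$ match, correct relative phases by powers of $\omega_{[\jay]}$, and typically apply several such two-level operations before the global denominator exponent drops by one. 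The induction therefore has to be organized around a finer measure (e.g.\ the denominator exponent of a single column together with the number of odd-residue entries in it), exactly as in the Giles--Selinger style synthesis you allude to. With those repairs the outline is sound, and the remaining work --- verifying that $\Rr$ derives each local rewrite in step (i) of your completeness argument --- is precisely where the real content of Greylyn's proof lies.
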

  
\begin{figure}
  \setcounter{mytmpcounter}{\value{equation}}
  \setcounter{equation}{0}
  \makeatletter
  \renewcommand{\theequation}{G\@arabic\c@equation}
  \makeatother
  \def\b#1{\makebox[12ex][l]{$#1$}}
  
  (a) Order of generators:
  \begin{eqnarray}
    \omega^8_{[\jay]}&=&\epsilon\\
    H^2_{[\jay,\kay]}&=&\epsilon\\
    X^2_{[\jay,\kay]}&=&\epsilon
  \end{eqnarray}

  (b) Disjoint generators commute:
  \begin{eqnarray}
    \omega_{[\jay]}\omega_{[\kay]}&=&\b{\omega_{[\kay]}\omega_{[\jay]},} \quad \mbox{where $\jay\neq\kay$}\\
    \omega_{[\ell]}H_{[\jay,\kay]}&=&\b{H_{[\jay,\kay]}\omega_{[\ell]},} \quad \mbox{where $\ell\neq\jay,\kay$}\\
    \omega_{[\ell]}X_{[\jay,\kay]}&=&\b{X_{[\jay,\kay]}\omega_{[\ell]},} \quad \mbox{where $\ell\neq\jay,\kay$}\\
    H_{[\jay,\kay]}H_{[\ell,\tee]}&=&\b{H_{[\ell,\tee]}H_{[\jay,\kay]},} \quad \mbox{where $\{\ell,\tee\}\cap\{\jay,\kay\}=\emptyset$}\\
    H_{[\jay,\kay]}X_{[\ell,\tee]}&=&\b{X_{[\ell,\tee]}H_{[\jay,\kay]},} \quad \mbox{where $\{\ell,\tee\}\cap\{\jay,\kay\}=\emptyset$}\\
    X_{[\jay,\kay]}X_{[\ell,\tee]}&=&\b{X_{[\ell,\tee]}X_{[\jay,\kay]},} \quad \mbox{where $\{\ell,\tee\}\cap\{\jay,\kay\}=\emptyset$}
  \end{eqnarray}

  (c) $X$ permutes indices:
  \begin{eqnarray}
    X_{[\jay,\kay]}\omega_{[\kay]}&=&\omega_{[\jay]}X_{[\jay,\kay]}\\
    X_{[\jay,\kay]}\omega_{[\jay]}&=&\omega_{[\kay]}X_{[\jay,\kay]}\\
    X_{[\jay,\kay]}X_{[\jay,\ell]}&=&X_{[\kay,\ell]}X_{[\jay,\kay]}\\
    X_{[\jay,\kay]}X_{[\ell,\jay]}&=&X_{[\ell,\kay]}X_{[\jay,\kay]}\\
    X_{[\jay,\kay]}H_{[\jay,\ell]}&=&H_{[\kay,\ell]}X_{[\jay,\kay]}\\
    X_{[\jay,\kay]}H_{[\ell,\jay]}&=&H_{[\ell,\kay]}X_{[\jay,\kay]}
  \end{eqnarray}

  (d) $\omega_{[\jay]}\omega_{[\kay]}$ is diagonal:
  \begin{eqnarray}
    \omega_{[\jay]}\omega_{[\kay]}X_{[\jay,\kay]}&=&X_{[\jay,\kay]}\omega_{[\jay]}\omega_{[\kay]}\\
    \omega_{[\jay]}\omega_{[\kay]}H_{[\jay,\kay]}&=&H_{[\jay,\kay]}\omega_{[\jay]}\omega_{[\kay]}
  \end{eqnarray}

  (e) Relations for $H$:
  \begin{eqnarray}
    H_{[\jay,\kay]}X_{[\jay,\kay]}&=&\omega^4_{[k]}H_{[\jay,\kay]}\\
    H_{[\jay,\kay]}\omega^2_{[\jay]}H_{[\jay,\kay]}&=& \omega^6_{[\jay]}H_{[\jay,\kay]} \omega^3_{[\jay]}\omega^5_{[\kay]}\\
    H_{[\jay,\kay]}H_{[\ell,\tee]} H_{[\jay,\ell]}H_{[\kay,\tee]}&= & H_{[\jay,\ell]}H_{[\kay,\tee]}H_{[\jay,\kay]}H_{[\ell,\tee]}, \quad \mbox{where $\kay<\ell$}
  \end{eqnarray}

  \caption{Greylyn's relations for
    $U_4(\Z[\frac{1}{\sqrt{2}},i])$. Whenever we use a generator
    $X_{[\jay,\kay]}$ or $H_{[\jay,\kay]}$, we implicitly assume
    that $j<k$.}
  \label{fig:greylyn-relations}

  \setcounter{equation}{\value{mytmpcounter}}
\end{figure}


\subsection{The Reidemeister-Schreier theorem for monoids}
\label{ssec:reidemeister-schreier}

The Reidemeister-Schreier theorem is a theorem in group theory that
allows one to derive a complete set of relations for a subgroup from a
complete set of relations for the supergroup, given enough information
about the cosets. We will use a version of the Reidemeister-Schreier
theorem that works for monoids, which we now describe. To our
knowledge, this monoid formulation of the Reidemeister-Schreier
theorem does not appear in the literature.

If $X$ is a set, let us write $X^*$ for the set of finite sequences of
elements of $X$, which we also call {\em words} over the alphabet $X$.
We write $w\cdot v$ or simply $wv$ for the concatenation of words,
making $X^*$ into a monoid. The unit of this monoid is the empty word
$\epsilon$. As usual, we identify $X$ with the set of one-letter
words.

Let $G$ be a monoid and let $X\seq G$ be a subset of $G$. We write
$\gen{X}$ for the smallest submonoid of $G$ containing $X$, and we say
that $X$ {\em generates} $G$ if $\gen{X}=G$. Given any word $w\in
X^*$, we write $\semG{w}\in G$ for the canonical interpretation of $w$
in $G$, i.e., $\semG{-}:X^*\to G$ is the unique monoid homomorphism
such that $\semG{x}=x$ for all $x\in X$.

A {\em relation} over $X$ is an element of $X^*\times X^*$, i.e., an
ordered pair of words. We say that a relation $(w,v)$ is {\em valid}
in $G$ if $\semG{w}=\semG{v}$. If $\Ss$ is a set of relations over
$X$, we write $\simS$ for the smallest congruence relation on $X^*$
containing $\Ss$.  Here, as usual, a congruence relation is an
equivalence relation that is compatible with the monoid operation,
i.e., such that $w\sim v$ and $w'\sim v'$ implies $ww'\sim vv'$. Given
a set $X$ of generators for a monoid $G$ and a set $\Ss$ of valid
relations, we say that $\Ss$ is {\em complete} if for all
$w,v\in X^*$, $\semG{w}=\semG{v}$ implies $w\simS v$. In that case, we
also say that $(X,\Ss)$ is a {\em presentation by generators and
  relations} (or simply {\em presentation}) of $G$.

\begin{definition}\label{def:f**}
  Given sets $X,Y$ and a function $f:X\to Y^*$, let $f^*:X^*\to Y^*$
  be the unique monoid homomorphism extending $f$. Concretely, $f^*$
  is given by $f^*(x_1\ldots x_n) = f(x_1)\cdot\ldots\cdot f(x_n)$.

  More generally, given sets $C,X,Y$ and a function $f:C\times X\to
  Y^*\times C$, let $f^{**} : C\times X^*\to Y^*\times C$ be the
  function defined by $f^{**}(c_0, x_1\ldots x_n) =
  (w_1\cdot\ldots\cdot w_n, c_n)$, where $f(c_{i-1}, x_i) = (w_i,
  c_i)$ for all $i=1,\ldots,n$.
\end{definition}

Note that in case $C$ is a singleton, the functions $f^{*}$ and
$f^{**}$ are essentially the same. In general, the difference is that
$f^{**}$ also keeps a ``state'' in the form of an element of $C$.

\begin{theorem}[Reidemeister-Schreier theorem for monoids]\label{thm:reide}
  Let $X$ and $Y$ be sets, and let $\Ss$ and $\Rr$ be sets of
  relations over $X$ and $Y$, respectively. Suppose that the
  following additional data is given:
  \begin{itemize}
  \item a set $C$ with a distinguished element $I\in C$,
  \item a function $f : X\to Y^*$,
  \item a function $h : C\times Y\to X^*\times C$,
  \end{itemize}
  subject to the following conditions:
  \begin{enumerate}\alphalabels
  \item For all $x\in X$, if $h^{**}(I, f(x)) = (v, c)$, then
    $v \simS x$ and $c=I$.
  \item For all $c\in C$ and $w,w'\in Y^*$ with $(w,w')\in\Rr$, if
    $h^{**}(c,w)=(v,c')$ and $h^{**}(c,w')=(v',c'')$ then
    $v \simS v'$ and $c'=c''$.
  \end{enumerate}
  Then for all $v,v'\in X^*$, $f^*(v)\simR f^*(v')$ implies $v\simS v'$.
\end{theorem}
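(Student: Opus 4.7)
The plan is to show that $h^{**}$ functions as a Schreier-style rewriter with two key properties: applied to $f^*(v)$ starting from state $I$, it returns a word $\simS$-equivalent to $v$ and ends back at state $I$; and applied at any starting state to $\simR$-equivalent words, it produces $\simS$-equivalent outputs with equal final states. The conclusion then drops out by combining these two facts. Throughout, I would rely on the basic compositionality of $h^{**}$ built into Definition~\ref{def:f**}: if $h^{**}(c,w) = (u,c')$ and $h^{**}(c',w') = (u',c'')$, then $h^{**}(c,ww') = (uu',c'')$, which is immediate by induction on the length of $w'$.

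The heart of the argument is to define a relation $\approx$ on $Y^*$ by declaring $w \approx w'$ iff for every $c \in C$, writing $h^{**}(c,w) = (u,c_1)$ and $h^{**}(c,w') = (u',c_2)$, one has $u \simS u'$ and $c_1 = c_2$. I would then verify that $\approx$ is an equivalence relation (immediate from the corresponding properties of $\simS$ and $=$), a congruence on $Y^*$ (using compositionality to thread the intermediate state through a concatenation, together with the fact that $\simS$ is itself a congruence), and contains $\Rr$ (this is precisely condition~(b)). Since $\simR$ is by definition the smallest congruence on $Y^*$ containing $\Rr$, it follows that $\simR \subseteq \approx$.

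Next I would show by induction on $v \in X^*$ that $h^{**}(I, f^*(v)) = (u,I)$ for some $u \simS v$. The base case $v = \epsilon$ is immediate; in the step $v = v_0 \cdot x$, the inductive hypothesis for $v_0$ forces the intermediate state to be $I$, after which condition~(a) applied to $x$ at state $I$ together with compositionality delivers the claim. Now, assuming $f^*(v) \simR f^*(v')$, the first part of the proof gives $f^*(v) \approx f^*(v')$, so evaluating at $c = I$ yields $\simS$-equivalent first components; the induction identifies these first components with $\simS$-equivalents of $v$ and $v'$, and transitivity of $\simS$ concludes $v \simS v'$.

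I expect the main obstacle to be the congruence verification for $\approx$: one must carefully track how $h^{**}$ evolves its state through a concatenation $w_1 w_2$ versus $w_1' w_2'$, ensuring that the state emerging from the left factor is equal on both sides so that the comparison on the right factor takes place at a common starting state. This is exactly why $\approx$ is defined uniformly over all starting states $c$ rather than just at $c = I$. Once this bookkeeping is set up, the remaining pieces are routine structural inductions.
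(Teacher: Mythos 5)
Your proposal is correct and follows essentially the same route as the paper: your relation $\approx$ is exactly the paper's Claim~B congruence, and your induction establishing $h^{**}(I,f^*(v))=(u,I)$ with $u\simS v$ is the paper's Claim~A, which the paper merely packages differently via the submonoid of ``special'' words and the homomorphism $g$. The final combination of the two facts at $c=I$ is identical.
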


To better understand the utility of this theorem, let us briefly
provide some context. First, we note that we will be using this
theorem in the case where $G$ is a monoid, $H$ is a submonoid of $G$,
$(Y,\Rr)$ is a presentation of $G$, $X$ is a set of generators for
$H$, and we wish to show that some proposed set of relations $\Ss$ is
complete for $H$. Assuming that all hypotheses of
Theorem~\ref{thm:reide} are satisfied, and further assuming that
$f$ represents the inclusion function of $H$ into $G$, i.e., that
for all $x\in X$, $\semG{f(x)} = \semH{x}$, the completeness of $\Ss$ then
follows. Namely, $\semH{v}=\semH{v'}$ implies
$\semG{f^*(v)}=\semG{f^*(v')}$, which implies $f^*(v)\simR f^*(v')$ by
completeness of $\Rr$, which implies $v \simS v'$ by
Theorem~\ref{thm:reide}.

To see how the theorem works, it is useful to further concentrate on
the case where $G$ and $H$ are groups, although the theorem itself
does not require this. In the case of groups, one would typically
consider the set $H\backslash G = \s{Hc\mid c\in G}$ of right cosets
of $H$ in $G$, and one would let $C$ be a set of chosen coset
representatives. The function $f$ is then chosen to assign to each
$x\in X$ some word $w\in Y^*$ such that $\semH{x}=\semG{w}$. The
function $h$ is chosen to assign to each pair of a coset
representative $c\in C$ and generator $y\in Y$ the unique coset
representative $c'\in C$ and some word $v\in X^*$ such that
$c\semG{y} = \semH{v}c'$. Conditions (a) and (b) are then sufficient
for the set of relations $\Ss$ to be complete. In the more general
case of monoids, $G$ is not necessarily partitioned into cosets, but
the method works anyway, provided that appropriate $C$, $f$, and $h$
can be chosen.

\begin{proof}[Proof of Theorem~\ref{thm:reide}]
  Let us say that a word $w\in Y^*$ is {\em special} if $h^{**}(I,w) =
  (v,I)$ for some $v\in X^*$. Let $\Ys^*$ be the set of special
  words. By definition of $h^{**}$, the empty word is special and
  special words are closed under concatenation, so $\Ys^*$ is a
  submonoid of $Y^*$. Moreover, the image of $f$ is special by
  property (a), and therefore the image of $f^*$ is also
  special. Finally, there is a translation back from special words in
  $Y$ to words in $X$: define $g:\Ys^*\to X^*$ by letting $g(w)=v$
  where $h^{**}(I,w) = (v,I)$. Clearly, $g$ is a monoid homomorphism.

  \smallskip
  \noindent
  Claim A: for all $v\in X^*$, we have $v\simS g(f^*(v))$.
  Proof: Since both $g$ and $f^*$ are monoid homomorphisms and $\simS$
  is a congruence, it suffices to show this in the case when $v\in X$
  is a generator. But in that case, it holds by assumption~(a).

  \smallskip
  \noindent
  Claim B: for all $w,w'\in Y^*$ and $c\in C$, if $w\simR w'$ and
  $h^{**}(c,w) = (v,d)$ and $h^{**}(c,w') = (v',d')$, then $v\simS v'$
  and $d=d'$. Proof: define a relation $\sim$ on $Y^*$ by $w\sim w'$
  if for all $c\in C$, $h^{**}(c,w) = (v,d)$ and
  $h^{**}(c,w') = (v',d')$ implies $v\simS v'$ and $d=d'$. We must
  show that $w\simR w'$ implies $w\sim w'$. Since $\simR$ is, by
  definition, the smallest congruence containing $\Rr$, it suffices to
  show that $\sim$ is a congruence containing $\Rr$. The fact that
  $\sim$ is reflexive, symmetric, and transitive is obvious from its
  definition. The fact that it is a congruence follows from the
  definition of $h^{**}$ and the fact that $\simS$ is a
  congruence. Finally, $\sim$ contains $\Rr$ by assumption~(b).
  
  Note that, as a special case of claim B, we also have the following:
  if $w,w'\in \Ys^*$ are special words, then $w\simR w'$ implies
  $g(w)\simS g(w')$. This follows directly from the definition of $g$.
  
  To finish the proof of the Reidemeister-Schreier theorem, let
  $v,v'\in X^*$ and assume that $f^*(v)\simR f^*(v')$. Then we have:
  \[
  v ~\simS~ g(f^*(v)) ~\simS~ g(f^*(v')) ~\simS~ v',
  \]
  where the first and last congruence holds by claim A, and the
  middle one holds by the special case of claim B. Therefore, $v\simS
  v'$ as claimed.
\end{proof}

\begin{corollary}
  Let $G$ be a monoid with presentation $(Y,\Rr)$, where $Y\seq G$.
  Suppose $H\seq G$ is a submonoid and $X$ is a set of generators for
  $H$. Let $\Ss$ be a set of valid relations for $H$. Assume a set $C$
  and functions $f$ and $h$ are given, satisfying the hypotheses of
  Theorem~\ref{thm:reide}, and assume that $f$ represents the
  inclusion function of $H$ into $G$, i.e., that $x\in X$,
  $\semG{f(x)} = \semH{x}$. Then $\Ss$ is a complete set of relations
  for $H$.\qed
\end{corollary}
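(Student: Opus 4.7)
The plan is to deduce the corollary as a direct instance of Theorem~\ref{thm:reide}. Unpacking what ``$\Ss$ is complete for $H$'' means, I would take arbitrary $v,v'\in X^*$ with $\semH{v}=\semH{v'}$ and aim to show $v\simS v'$. By the conclusion of the theorem, it would suffice to exhibit $f^*(v)\simR f^*(v')$, so the entire task reduces to transporting the equality $\semH{v}=\semH{v'}$ in $H$ to a congruence between $f^*(v)$ and $f^*(v')$ under $\Rr$.

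To do this, I would first lift the hypothesis ``$f$ represents the inclusion of $H$ into $G$'' from generators to arbitrary words. From $\semG{f(x)} = \semH{x}$ for $x\in X$ I would conclude $\semG{f^*(w)} = \semH{w}$ for all $w\in X^*$: the assignments $w\mapsto \semG{f^*(w)}$ and $w\mapsto\semH{w}$ are both monoid homomorphisms $X^*\to G$ that agree on the generating set $X$, hence they coincide on all of $X^*$ by the universal property of the free monoid. Applied to $v$ and $v'$, this converts $\semH{v}=\semH{v'}$ into $\semG{f^*(v)}=\semG{f^*(v')}$. Completeness of $\Rr$, which holds because $(Y,\Rr)$ is a presentation of $G$, then yields $f^*(v)\simR f^*(v')$; invoking Theorem~\ref{thm:reide} produces $v\simS v'$ as required.

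There is essentially no obstacle: the corollary is a clean packaging of the motivating discussion that immediately precedes the proof of Theorem~\ref{thm:reide}. The only bookkeeping step is the extension from $f$ on generators to $f^*$ on words, which is immediate from freeness of $X^*$; every other step is a direct appeal either to completeness of $\Rr$ in $G$ or to Theorem~\ref{thm:reide} itself.
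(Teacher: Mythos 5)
Your proposal is correct and follows exactly the paper's own justification: the chain $\semH{v}=\semH{v'}\Rightarrow\semG{f^*(v)}=\semG{f^*(v')}\Rightarrow f^*(v)\simR f^*(v')\Rightarrow v\simS v'$ is precisely the argument the paper gives in the discussion following Theorem~\ref{thm:reide}, which is why the corollary is stated with an immediate \qed. Your extra remark that the generator-level hypothesis extends to all words by freeness of $X^*$ is a correct (and harmless) elaboration of a step the paper leaves implicit.
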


\subsection{Pauli rotation representation}
\label{ssec:pauli}

One of the problems we face in applying the Reidemeister-Schreier
theorem is that we must show that a large number of
(computer-generated) Clifford+$T$ relations follow from the relations
in Figure~\ref{fig-relations}. It would be very useful if this task
could be automated. Ideally, the relations in
Figure~\ref{fig-relations} could be turned into a set of rewrite rules
with the property that every Clifford+$T$ circuit can be rewritten to
a unique {\em normal form}; in that case, to show that a given
relation follows from the ones in Figure~\ref{fig-relations}, it would
be sufficient to reduce the left-hand and right-hand sides to normal
form and check that they are equal.

Unfortunately, no such rewrite system or normal form is known.
Instead, the best we can do is a semi-automated process in which words
are rewritten to something that is ``almost'' a normal form, i.e., not
quite unique, but close enough so that many relations can be proved
automatically, and the rest are more easily solvable by hand.

For this, the {\em Pauli rotation representation} of Clifford+$T$
operators turns out to be useful. This representation was first
described in {\cite[Section 3]{GKMR}}. We start by noting that the
$T$-gate is a linear combination of the identity $I$ and the Pauli
operator $Z$. Specifically:
\begin{equation}
T ~=~ \zmatrix{cc}{1&0\\0&\omega}
~=~ \frac{1+\omega}{2} I + \frac{1-\omega}{2} Z.
\end{equation}
Therefore, an operator $A$ commutes with $T$ if and only if it
commutes with $Z$. More generally, given any $n$-qubit Pauli operator
$P$, define
\begin{equation}\label{eqn:def-rsyll}
\Rsyll{P} ~=~ \frac{1+\omega}{2} I + \frac{1-\omega}{2} P.
\end{equation}
Note that $\Rsyll{Z}=T$.  We refer to the operators $\Rsyll{P}$ as
{\em (45 degree) Pauli rotations}. Note that $\Rsyll{P}$ is not a
Pauli operator; we call it a Pauli rotation because it is a rotation
about a Pauli axis. By {\eqref{eqn:def-rsyll}}, it is again obvious
that an operator $A$ commutes with $\Rsyll{P}$ if and only if it
commutes with $P$.  Moreover, from {\eqref{eqn:def-rsyll}}, we get the
following fundamental property of Pauli rotations:
\begin{equation}\label{eqn:conjugate}
CPC\inv = Q \quad\mbox{if and only if}\quad C\Rsyll{P}C\inv = \Rsyll{Q}.
\end{equation}
Let
$Z_{(i)} = I\otimes\ldots\otimes I\otimes Z\otimes I\otimes \ldots
\otimes I$ be the $n$-qubit Pauli operator with $Z$ acting on the
$i$th qubit, and similarly
$T_{(i)} = I\otimes\ldots\otimes I\otimes T\otimes I\otimes \ldots
\otimes I = \Rsyll{Z_{(i)}}$. Since the Clifford operators act
transitively on the set of non-trivial self-adjoint Pauli operators by
conjugation, for every such $n$-qubit Pauli operator $P$, there exists
a (non-unique) Clifford operator $C$ such that $CZ_{(1)}C\inv = P$,
and therefore $CT_{(1)}C\inv = \Rsyll{P}$. We therefore see that all
of the Pauli rotations are Clifford conjugates of the $T_{(1)}$-gate.

Next, we note that every Clifford+$T$ operator can be written as a
product of Pauli rotations followed by a single Clifford
operator. Specifically, by definition, every Clifford+$T$ operator can
be written as
\[ C_1 T_{(i_1)} C_2 T_{(i_2)} C_3 \cdots C_n T_{(i_{n})} C_{n+1}.
\]
For all $k$, let $D_k = C_1C_2\cdots C_k$, so that $C_k = D_{k-1}\inv
D_k$.  Then the above can be rewritten as
\begin{eqnarray*}
  C_1 T_{(i_1)} C_2 T_{(i_2)} C_3 \cdots C_n T_{(i_{n})} C_{n+1}
  &=& C_1 \Rsyll{Z_{(i_1)}} C_2 \Rsyll{Z_{(i_2)}} C_3 \cdots C_n \Rsyll{Z_{(i_n)}} C_{n+1}\\
  &=& D_1 \Rsyll{Z_{(i_1)}} D_1\inv D_2 \Rsyll{Z_{(i_2)}} D_2\inv
  D_3\cdots D_{n-1}\inv D_n\Rsyll{Z_{(i_n)}} D_n\inv D_{n+1}\\
  &=& \Rsyll{D_1Z_{(i_1)}D_1\inv} \Rsyll{D_2Z_{(i_2)}D_2\inv}
  \cdots \Rsyll{D_n Z_{(i_n)} D_n\inv} D_{n+1} \\
  &=& \Rsyll{P_1} \Rsyll{P_2} \cdots \Rsyll{P_n} D_{n+1},
\end{eqnarray*}
where $P_k = D_kZ_{(i_k)}D_k\inv$. Therefore, every Clifford+$T$
operator can be written as a product of Pauli rotations followed by a
single Clifford operator, as claimed. It also shows that the number of
required Pauli rotations is at most equal to the $T$-count of the
original circuit. In fact, since every Pauli rotation has $T$-count 1,
it is clear that every product of $n$ Pauli rotations can be converted
to a circuit of $T$-count $n$, and vice versa. In particular, the
minimal $T$-count of a circuit is equal to the minimal number of Pauli
rotations required to express it.

The Pauli rotation representation is not unique. There are some
obvious relations:
\begin{itemize}
\item[(a)] $\Rsyll{P}$ and $\Rsyll{Q}$ commute if and only if $P$ and $Q$
  commute. This follows from {\eqref{eqn:def-rsyll}}.
\item[(b)] For any $P$, the operator $\Rsyll{P}^2$ is
  Clifford, and therefore can be eliminated, resulting in a shorter
  word.  To see why, recall that there exists a Clifford
  operator $C$ such that $\Rsyll{P}=CT_{(1)}C\inv$; therefore
  $\Rsyll{P}^2=CT_{(1)}^2C\inv$. Since $T_{(1)}^2=S_{(1)}$ is a
  Clifford gate, it follows that $\Rsyll{P}^2$ is Clifford.
\item[(c)] For any $P$, there exists a Clifford operator $D$ such that
  $\Rsyll{(-P)} = \Rsyll{P}D$. Indeed, let $C$ be a Clifford operator
  such that $P = CZ_{(1)}C\inv$. Then $-P=C(-Z_{(1)})C\inv =
  CX_{(1)}Z_{(1)}X_{(1)}C\inv$. Therefore
  $\Rsyll{(-P)}=CX_{(1)}T_{(1)}X_{(1)}C\inv$. Using the relation $XTX =
  T S^\dagger \omega$, we have
  $\Rsyll{(-P)}=CT_{(1)}S_{(1)}^\dagger\omega C\inv = CT_{(1)}C\inv
  CS_{(1)}^\dagger\omega C\inv = \Rsyll{P}CS_{(1)}^\dagger\omega
  C\inv$. Thus, the claim holds with $D=CS_{(1)}^\dagger\omega C\inv$.
\end{itemize}
It is relatively easy to standardize the Pauli rotation representation
modulo the above three relations: First, we eliminate any generators
of the form $\Rsyll{(-P)}$. This can be done from left to right, using
relations from (c); the resulting Clifford operator can be shifted all
the way to the end of the word using relations of the form
$D R_P = R_{Q}D$, where $Q=DPD\inv$, see
{\eqref{eqn:conjugate}}. Next, we use relations from (a) to swap
adjacent generators when possible, for example arriving at the
lexicographically smallest word that is equal to the given word up to
such commuting permutations. Next, we use relations from (b) to remove
any duplicates. Should there be any such duplicates, the resulting
word will need to be standardized again, but since it uses fewer Pauli
rotations, the process eventually terminates.

However, even when the Pauli rotation representation is standardized
modulo the relations (a), (b), and (c), it is still not
unique. Indeed, there are some ``non-obvious'' relations. In a
sense, the contribution of this paper is to state exactly what these
non-obvious relations are. They turn out to be the following. Here,
for brevity, we have omitted the tensor symbol $\otimes$, i.e., we
wrote $\Rsyll{IX}$ instead of $\Rsyll{I\otimes X}$.
\[
  \begin{array}{rcl}
    \Rsyll{IX}\Rsyll{IZ}\Rsyll{ZZ}\Rsyll{ZX} &=& 
    \Rsyll{ZX}\Rsyll{IZ}\Rsyll{ZZ}\Rsyll{IX}, \\
    \Rsyll{IX}\Rsyll{IZ}\Rsyll{IX}\Rsyll{ZX}\Rsyll{ZZ}\Rsyll{ZX} &=& 
    \Rsyll{ZX}\Rsyll{IZ}\Rsyll{IX}\Rsyll{ZX}\Rsyll{ZZ}\Rsyll{IX}, \\
    \Rsyll{XY}\Rsyll{YZ}\Rsyll{XZ}\Rsyll{IX}\Rsyll{ZI}
    \Rsyll{YX}\Rsyll{ZY}\Rsyll{ZX}\Rsyll{XI}\Rsyll{IZ} &=& 
    \Rsyll{YX}\Rsyll{ZY}\Rsyll{ZX}\Rsyll{XI}\Rsyll{IZ}
    \Rsyll{XY}\Rsyll{YZ}\Rsyll{XZ}\Rsyll{IX}\Rsyll{ZI}.
  \end{array}
\]
These turn out to be equivalent to relations {\eqref{eqn-rela}},
{\eqref{eqn-relb}}, and {\eqref{eqn-relc}} in
Figure~\ref{fig-relations}, respectively.  We will address the
question of what these relations might ``mean'' (i.e., how one might
be able to see that they are true without computing the matrices) in
Section~\ref{sec:axioms}.

\subsection{Proof assistants}

As outlined in Section~\ref{sec:outline}, once we are armed with the
Reidemeister-Schreier theorem, in theory there is a mechanical way to
obtain a complete set of relations for $\CliffordT(2)$, given that
$\CliffordT(2)$ is a subgroup of $U_4(\Z[\frac{1}{\sqrt{2}},i])$ and
we already have a complete set of relations for the latter due to
Greylyn {\cite{greylyn2014generators}}. However, when applied in
practice, this method yields a large number of very large relations,
all of which must be shown to follow from the relations in
Figure~\ref{fig-relations}. Although
Figure~\ref{fig:greylyn-relations} appears to contain only 20
relations, they are actually parameterized by indices such as
$\jay,\kay$, etc. After accounting for these indices, there are 123
distinct relations. Since there are two cosets of $\CliffordT(2)$ in
$U_4(\Z[\frac{1}{\sqrt{2}},i])$, under part (b) of the
Reidemeister-Schreier theorem, each of these 123 relations yields two
Clifford+$T$ relations, plus another 8 relations (one for each
generator) from part (a), giving a total of 254 Clifford+$T$ relations
that must be verified.  This task is too daunting to do ``by hand''.

Given the mechanical and repetitive nature of these calculations, we
initially wrote a computer program to generate and verify the
relations. However, this raised another issue: our program was large
and complicated and used a variety of tactics to show that the given
relations follow from the ones in Figure~\ref{fig-relations}. We could
not claim with mathematical certainty that our program was free of
bugs, nor that it didn't use some hidden assumptions that weren't
actually consequences of Figure~\ref{fig-relations}. Moreover, it
would have been unreasonable for any referee to verify our calculations.

For this reason, we decided to go one step further and formalize the
soundness and completeness proofs in a {\em proof assistant}. A proof
assistant is a piece of software in which one can write definitions,
theorems, and proofs, and the software will check the correctness of
the proofs. Purists might object that the proof assistant is itself a
piece of software that might be buggy. But, as has been argued
eloquently by {\cite{Gonthier2008,Hales2008}}, current proof
assistants can be scrutinized at many levels and are many orders of
magnitude more reliable than the traditional way of checking
paper-and-pencil proofs. The particular proof assistant we used in
this work is Agda {\cite{Agda}}.

\section{Proof of the main result}
\label{sec:proof}
\subsection{Soundness and completeness}
\label{ssec:soundness-and-completeness}

Our goal is to prove that Theorem~\ref{thm:greylyn} implies
Theorem~\ref{thm:main}. Recall that Greylyn's set of generators for
$U_4(R)$ is
$\Yy=\s{\omega_{[j]}, X_{[j,k]}, H_{[j,k]} \mid
  \mbox{$j,k\in\s{1,...,4}$ and $j<k$}}$. Also recall that our target
set of generators for $\CliffordT(2)$ is
$\Xx=\s{\omega,H_0,H_1,S_0,S_1,T_0,T_1,\CZ}$. We fix a translation
from $\Xx$ to $\Yy^*$ as follows:
\[
\def\arraystretch{1.2}
\begin{array}{lcl}
  f(\omega) &=& \omega_{[0]}\omega_{[1]}\omega_{[2]}\omega_{[3]}, \\
  f(H_0)    &=& H_{[1,3]}H_{[0,2]}, \\
  f(H_1)    &=& H_{[2,3]}H_{[0,1]}, \\
  f(S_0)    &=& \omega_{[2]}^2\omega_{[3]}^2, \\
  f(S_1)    &=& \omega_{[1]}^2\omega_{[3]}^2, \\
  f(T_0)    &=& \omega_{[2]}\omega_{[3]}, \\
  f(T_1)    &=& \omega_{[1]}\omega_{[3]}, \\
  f(\CZ)    &=& \omega_{[3]}^4. \\
\end{array}
\]
We prove the following soundness and completeness theorems for this
translation:
\begin{theorem}[Soundness]\label{thm:soundness}
  For all $w,v\in\Xx^*$, $w \simS v$ implies $f^*(w)\simR f^*(v)$.
\end{theorem}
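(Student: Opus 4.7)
The plan is to exploit the universal property of $\simS$ as the smallest congruence containing $\Ss$. Observe first that $f^*:\Xx^*\to\Yy^*$ is, by construction, a monoid homomorphism. Now define the relation $\approx$ on $\Xx^*$ by $w\approx v$ iff $f^*(w)\simR f^*(v)$. Since $\simR$ is an equivalence and $f^*$ is a function, $\approx$ is an equivalence relation; and since $\simR$ is a congruence on $\Yy^*$ and $f^*$ preserves concatenation, $\approx$ is a congruence on $\Xx^*$. Therefore, to conclude that $\simS\;\subseteq\;\approx$ — which is exactly the statement of the theorem — it suffices to show that every defining relation of $\Ss$ lies in $\approx$, i.e., for each $(w,v)\in\Ss$ we must verify $f^*(w)\simR f^*(v)$.

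This reduces the proof to a finite checklist, one entry per relation of Figure~\ref{fig-relations}. For each relation, I would expand both sides using the definition of $f$ (so each $\omega,H_i,S_i,T_i,\CZ$ is replaced by the corresponding word over Greylyn's alphabet $\Yy$), and then derive the two resulting words from each other using only the relations $\Rr$ in Figure~\ref{fig:greylyn-relations}. The easy cases are the monoidal relations (a) and the order relations (b): translations of $\omega,T_i,S_i,\CZ$ are products of the diagonal generators $\omega_{[j]}$, which commute freely and have order $8$ by relations (G1) and (G4), so these follow almost immediately; the $H_i$ involutions reduce to two disjoint $H_{[j,k]}$ factors that each square to $\epsilon$ via (G2), using commutation (G7). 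The Clifford relations of group (c), and the ``obvious'' $T$-relations of group (d), unfold to moderately long Greylyn words whose equivalence is obtained by repeatedly applying the permutation rules (c) and the diagonal relations in (d) of Figure~\ref{fig:greylyn-relations}, together with the $H$-relations (G18)--(G20).

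The main obstacle is the three ``non-obvious'' $T$-relations \eqref{eqn-rela}--\eqref{eqn-relc}. Their translations under $f$ are long words over $\Yy$ — the third one, corresponding to \eqref{eqn-relc}, unfolds to a derivation whose Agda script (as mentioned in the introduction) runs to several hundred equational steps — and there is no a priori reason for any short manipulation in $\Rr$ to relate the two sides. The strategy there is to normalize both sides using the Pauli-rotation representation described in Section~\ref{ssec:pauli}: convert each Clifford+$T$ word to a product of Pauli rotations followed by a Clifford prefix, standardize modulo the obvious relations (a), (b), (c) of that section, and then invoke the corresponding identity among Pauli rotations (exhibited at the end of Section~\ref{ssec:pauli}). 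Since each step of this normalization is itself expressible in $\Rr$ via the embedding $f$, the resulting derivation, while long, is entirely routine once generated.

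Finally, since this case analysis is mechanical but enormous, I would carry out all 20-odd verifications inside the Agda formalization rather than by hand; the theorem statement itself then becomes a one-line combinator that assembles the congruence closure argument of the first paragraph with the per-relation lemmas. In summary: the conceptual content of the proof is a three-line congruence argument, and the labor lies entirely in the per-relation checks, which are discharged by the proof assistant.
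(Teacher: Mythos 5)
Your opening reduction is correct and is the standard move: $f^*$ is a monoid homomorphism, so the relation $w\approx v$ iff $f^*(w)\simR f^*(v)$ is a congruence on $\Xx^*$, and by minimality of $\simS$ it suffices to verify $f^*(w)\simR f^*(v)$ for each generating relation $(w,v)\in\Ss$; the paper's (Agda) proof has exactly this top-level shape. The genuine gap is in how you propose to discharge the three non-obvious relations \eqref{eqn-rela}--\eqref{eqn-relc}. The Pauli-rotation machinery of Section~\ref{ssec:pauli} is a device for showing that Clifford+$T$ relations \emph{follow from $\Ss$}; the paper introduces it for the \emph{completeness} direction, where the computer-generated relations must be derived from Figure~\ref{fig-relations}. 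It gives no leverage on the soundness obligation, which is a derivation from $\Rr$ over Greylyn's alphabet $\Yy$. Worse, the ``corresponding identities among Pauli rotations'' you plan to invoke are, as the paper states, merely restatements of \eqref{eqn-rela}--\eqref{eqn-relc} themselves --- they are not established independently anywhere --- so that step is circular. Likewise, the claim that ``each step of this normalization is itself expressible in $\Rr$ via the embedding $f$'' presupposes soundness of the relations used in the normalization, i.e., part of what is being proved.

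The repair is simple and makes the case analysis far lighter than your proposal suggests: since $(\Yy,\Rr)$ is a \emph{complete} presentation of $U_4(R)$ (Theorem~\ref{thm:greylyn}, which the paper takes as given and already uses in the chain of equivalences of Section~\ref{ssec:soundness-and-completeness}), to establish $f^*(w)\simR f^*(v)$ it suffices to check $\semG{f^*(w)}=\semG{f^*(v)}$. Concretely, one verifies that both sides of every relation in Figure~\ref{fig-relations} denote the same $4\times4$ matrix and that $f$ sends each generator of $\Xx$ to a word denoting the same matrix; then $w\simS v$ gives $\semH{w}=\semH{v}$, hence $\semG{f^*(w)}=\semG{f^*(v)}$, hence $f^*(w)\simR f^*(v)$. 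This is a finite, mechanical matrix computation requiring no equational derivations at all, which is precisely why soundness is the easy half of the paper and all the substantive labor (including the Pauli-rotation tooling) lives in the completeness proof, where no such semantic shortcut is available.
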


\begin{theorem}[Completeness]\label{thm:completeness}
  For all $w,v\in\Xx^*$, $f^*(w)\simR f^*(v)$ implies $w \simS v$.
\end{theorem}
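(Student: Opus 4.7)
The plan is to derive Theorem~\ref{thm:completeness} from the Reidemeister-Schreier theorem (Theorem~\ref{thm:reide}) applied to the inclusion $\CliffordT(2)\seq U_4(R)$. Greylyn's Theorem~\ref{thm:greylyn} supplies the required presentation $(\Yy,\Rr)$ of $U_4(R)$, and the translation $f:\Xx\to\Yy^*$ defined above is designed so that $\semG{f(x)} = \semH{x}$ for each $x\in\Xx$. What remains is to exhibit the coset data $C$, $I\in C$, and the function $h:C\times\Yy\to\Xx^*\times C$, and to verify conditions (a) and (b) of Theorem~\ref{thm:reide}; the conclusion of that theorem is then exactly the statement to be proved.

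For the coset data, recall from Section~\ref{sec:outline} that $\CliffordT(2)$ equals the subgroup of $U_4(R)$ whose determinant is a power of $i$, so the determinant modulo $\s{\pm 1,\pm i}$ classifies the cosets and there are exactly two of them. I would therefore take $C=\s{I,\omega_{[0]}}$ with distinguished element $I$, since $\det(\omega_{[0]})=\omega$ is not a power of $i$. To construct $h$, for each pair $(c,y)\in C\times\Yy$ I compute the unique $c'\in C$ such that $cy\in\CliffordT(2)\cdot c'$ and then find an explicit word $v\in\Xx^*$ representing the Clifford+$T$ operator $cy(c')\inv$; this is a finite case analysis over the generators of $\Yy$.

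Condition (a) of Theorem~\ref{thm:reide} then amounts to eight independent checks, one per generator $x\in\Xx$: run $h^{**}$ on $f(x)$ starting from coset $I$, and verify that the returned word in $\Xx^*$ is $\simS$-equivalent to $x$ and that the final coset is $I$. Condition (b) is the bulk of the work. After expanding the 20 schematic relation patterns of Figure~\ref{fig:greylyn-relations} over all admissible index choices in $\s{0,1,2,3}$, one obtains 123 concrete relations, and each must be pushed through $h^{**}$ from both cosets, producing $2\cdot 123 = 246$ $\Xx^*$-identities that have to be derived purely from $\Ss$. Together with the 8 identities from condition (a) this gives the 254 equations mentioned in Section~\ref{sec:background}.

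The main obstacle is verifying the 246 identities of condition (b): many of them are long words over $\Xx$, and the only permitted reasoning is rewriting by the 20 relations of Figure~\ref{fig-relations}, including the three ``non-obvious'' $T$-relations \eqref{eqn-rela}--\eqref{eqn-relc}. My strategy would be to reduce each identity to something close to a normal form using the Pauli rotation representation of Section~\ref{ssec:pauli}: push Clifford factors to the right, sort commuting rotations using observation~(a), eliminate squared rotations using~(b), and normalize signs using~(c); then finish the residual comparisons by hand, invoking one of the non-obvious relations where necessary. Because even this semi-automated workflow is far too large and too delicate to be trusted on paper, I would, as the authors do, encode the data $C$, $f$, $h$, and every equational derivation in Agda, so that the final implication reduces to type-checking. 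Once (a) and (b) are certified, Theorem~\ref{thm:reide} yields $f^*(w)\simR f^*(v) \Rightarrow w\simS v$, which is precisely Theorem~\ref{thm:completeness}.
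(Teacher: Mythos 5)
Your proposal matches the paper's proof: the authors likewise instantiate the monoid Reidemeister--Schreier theorem with the two determinant cosets, verify conditions (a) and (b) by pushing the 123 instantiated Greylyn relations through $h^{**}$ from both cosets (254 checks in total), discharge the resulting $\Xx^*$-identities semi-automatically via the Pauli rotation representation, and certify everything in Agda. The only difference is an optimization: the paper's formal development first replaces Greylyn's presentation by an equivalent one with 5 generators and 19 relations, cutting the 254 required identities down to 46 explicit equational proofs.
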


As already noted in Section~\ref{ssec:reidemeister-schreier}, these
two theorems, together with Theorem~\ref{thm:greylyn}, immediately
imply Theorem~\ref{thm:main}. Specifically, we have $w\simS v$ if and
only if $f^*(w)\simR f^*(v)$ if and only if
$\sem{f^*(w)} = \sem{f^*(v)}$ if and only if $\sem{w} = \sem{v}$,
where the first equivalence follows from Theorems~\ref{thm:soundness}
and {\ref{thm:completeness}}, the second equivalence follows from
Theorem~\ref{thm:greylyn}, and the last equivalence holds because the
function $f$ respects the interpretation.

\subsection{The formal proof}

Soundness and completeness are formally proved in the Agda code
accompanying this paper {\cite{Agda-code}}. We organized the code to
make it hopefully as easy as possible to verify the result. The code
consists of 67 files that are listed in Figure~\ref{fig:files}, and
which we now briefly describe.

\paragraph{(a) Background.}

The eight files in the ``background'' section contain general-purpose
definitions of the kind that are usually found in the Agda standard
library, i.e., basic properties of booleans, integers, equality,
propositional connectives, etc. The reason we did not use the actual
Agda standard library is that it is very large and changes frequently.
We felt that it is better for our code to be self-contained rather
than depending on a particular library version.

\paragraph{(b) Statement of the result.}

In these two files, we give a minimal set of definitions that allows
us to {\em state} the soundness and completeness theorems. The file
{\tt Word.agda} defines what it means to be a word over a set of
generators, as well as the inference rules we use for deriving
relations from a set of axioms (such as reflexivity, symmetry,
transitivity, congruence, associativity, and the left and right unit
laws). Note that in the Agda code, we define a word as a term in the
language of monoids, rather than as a sequence of generators. In other
words, associativity and the unit laws are treated as laws, rather
than being built into the definition.  The file {\tt Word.agda} also
defines the $f^*$ operation used in the statement of the soundness and
completeness theorems.  The file {\tt Generator.agda} defines the
Clifford+$T$ generators and the relations from
Figure~\ref{fig-relations}, Greylyn's generators and the relations
from Figure~\ref{fig:greylyn-relations}, and the translation function
$f$ from Section~\ref{ssec:soundness-and-completeness}. It also
contains the statement of the soundness and completeness theorems, but
not their proofs. The reason we state these theorems separately from
their proofs is to make sure that Agda (and a human reviewer) can
verify that the statement of these theorems only depends on the
relatively small number of definitions given so far, and not on the
much larger number of definitions and tactics used in the proof.

\paragraph{(c) Details of the proof.}

The proof of the soundness and completeness theorems relies on a large
number of auxiliary definitions and lemmas, and comprises the bulk of
our code with 56 files.  This includes a formal proof of the
Reidemeister-Schreier theorem; several tactics for automating steps in
certain equational proofs; a simplified presentation of Greylyn's
generators and relations, using only 5 generators and 19 relations
(instead of Greylyn's original 16 generators and 123 relations), along
with the proof of its completeness; a formalization of Pauli rotations
and their relevant properties; as well as 46 step-by-step proofs of
individual relations. These details are primarily intended to be
machine-readable, and can safely be skipped by readers who trust
Agda and merely want to check the proof rather than reading
it. However, all of the files are documented and human-readable.

The relations in the files {\tt Equation1.agda} to {\tt
  Equation46.agda} are at the heart of the completeness proof. These
are the relations that must be proved to satisfy the hypotheses of the
Reidemeister-Schreier theorem. Some of these relations are trivial,
such as {\tt Equation13.agda}. Others are highly non-trivial and
require almost a thousand proof steps, such as {\tt Equation44.agda}.
In particular, the proofs that require relations
{\eqref{eqn-rela}}--{\eqref{eqn-relc}} from Figure~\ref{fig-relations}
tend to be non-obvious; in fact, this is how we discovered relations
{\eqref{eqn-rela}}--{\eqref{eqn-relc}} in the first place. We did not
write these equational proofs by hand; instead, we used a
semi-automated process where most of the proofs were generated by a
separate Haskell program and output in a format that is convenient and
efficient for Agda to check. Originally, we also attempted to write
Agda tactics that would allow Agda to derive these relations fully
automatically; however, this failed due to performance issues with
Agda.

\paragraph{(d) Proof witness.}

Finally, the file {\tt Proof.agda} contains nothing but a witness of
the fact that the soundness and completeness theorems have been
formally proven. A reader who wants to skip the details of the formal
proof only needs to check two things: the statement of the main result
in {\tt Generator.agda} (to make sure the statement correctly captures
what we said it does), and the fact that the Agda proof checker
accepts {\tt Proof.agda}.

\begin{figure}
  \def\fn#1{{\tt\footnotesize #1}}
  \def\ex#1{{\small #1}}
  
  (a) Background:

  \smallskip
  
  \begin{tabular}{p{2.08in}@{~}p{4.1in}}
    \fn{Boolean.agda} & \ex{The type of booleans.} \\
    \fn{Proposition.agda} & \ex{Basic definitions in propositional logic.} \\
    \fn{Equality.agda} & \ex{Basic properties of equality.} \\
    \fn{Decidable.agda} & \ex{Some definitions to deal with decidable properties.} \\
    \fn{Inspect.agda} & \ex{Agda's ``inspect'' paradigm, to assist with pattern matching.} \\
    \fn{Nat.agda} & \ex{Basic properties of the natural numbers.} \\
    \fn{Maybe.agda} & \ex{The ``Maybe'' type.} \\
    \fn{List.agda} & \ex{Basic properties of lists.} \\
  \end{tabular}
  
  \bigskip

  (b) Statement of the result

  \smallskip

  \begin{tabular}{p{2.08in}@{~}p{4.1in}}
    \fn{Word.agda} & \ex{Basic properties of words.} \\
    \fn{Generator.agda} & \ex{Generators and relations for our two
    groups, and statement of main result.} \\
  \end{tabular}
  
  \bigskip

  (c) Proof of the result
  
  \smallskip

  \begin{tabular}{p{2.08in}@{~}p{4.1in}}
    \fn{Word-Lemmas.agda} & \ex{Basic lemmas about monoids and groups, and equational reasoning.} \\
    \fn{Reidemeister-Schreier.agda} & \ex{Two versions of the Reidemeister-Schreier theorem.} \\
    \fn{Word-Tactics.agda} & \ex{Some tactics for proving properties of words.} \\
    \fn{Clifford-Lemmas.agda} & \ex{A decision procedure for equality of 2-qubit Clifford operators.} \\
    \fn{CliffordT-Lemmas.agda} & \ex{Properties and tactics for Clifford+$T$ operators.} \\
    \fn{Greylyn-Lemmas.agda} & \ex{Some automation for Greylyn's 1- and 2-level operators.} \\
    \fn{Soundness.agda} & \ex{Proof of soundness.} \\
    
    \fn{Greylyn-Simplified.agda} & \ex{A smaller set of generators and relations for Greylyn's operators.} \\
    \fn{PauliRotations.agda} & \ex{Definitions, properties, and tactics for Pauli rotations.} \\
    \fn{Equation1.agda} -- \fn{Equation46.agda} & \ex{Explicit proofs of 46 relations required for completeness.} \\
    \fn{Completeness.agda} & \ex{Proof of completeness.} \\
  \end{tabular}
  
  \bigskip

  (d) Top-level proof witness
  
  \smallskip

  \begin{tabular}{p{2.08in}@{~}p{4.1in}}
    \fn{Proof.agda} & \ex{The final witness for soundness and completeness.} \\
  \end{tabular}

  \caption{List of Agda files. The files are listed in order of
    dependency, i.e., each file only imports earlier files.}\label{fig:files}
\end{figure}

\section{Discussion of the axioms}
\label{sec:axioms}

Here, we give some further perspectives on what the axioms of
Figure~\ref{fig-relations} might ``mean'', and in particular, how one
might convince oneself that the relations are true without having to
compute the corresponding matrices.

Note that we are not claiming that axioms
{\eqref{rel-c1}}--{\eqref{rel-c20}} are independent; for example,
{\eqref{rel-c8}} clearly follows from {\eqref{rel-c14}} and
{\eqref{rel-c16}}; however, we found it useful to separate the
Clifford relations from the rest, which is why {\eqref{rel-c8}} was
included. It would be nice to know whether axioms
{\eqref{eqn-rela}}--{\eqref{eqn-relc}} are independent from the others
and from each other, and this seems likely to be true, but we do not
know.

The axioms in groups (a)--(c) are well-known; they merely express the
Clifford relations {\cite{selinger2013generators}} and the fact that
operators on disjoint qubits commute.  Relations~{\eqref{rel-c14}} and
{\eqref{rel-c15}} express the well-known facts that $T^2=S$ and
$(TX)^2=\omega$, whereas relation~{\eqref{rel-c16}} holds because
diagonal operators commute. Note that the upside-down version of
relation~{\eqref{rel-c16}} was not included among our axioms; this is
because it is actually derivable from the remaining axioms.
Relation~{\eqref{rel-c17}} becomes obvious once one realizes that the
swap gate can be expressed as a sequence of three controlled-not
gates:
\[
\m{\begin{qcircuit}[scale=0.5]
    \grid{2.5}{0,1}
    \swapgate{1.25}{0}{1};
  \end{qcircuit}
}
~=~
\m{\begin{qcircuit}[scale=0.5]
    \grid{5.5}{0,1}
    \controlled{\notgate}{1.25,1}{0};
    \controlled{\notgate}{2.75,0}{1};
    \controlled{\notgate}{4.25,1}{0};
  \end{qcircuit}
}
\]
Relation~{\eqref{rel-c17}} is then obtained by simplifying the
following, which expresses the fact that a $T$-gate can be moved past
a swap-gate:
\[
\m{\begin{qcircuit}[scale=0.5]
    \grid{4}{0,1}
    \swapgate{1.25}{0}{1};
    \gate{$T$}{2.75,1};
  \end{qcircuit}
}
~=~
\m{\begin{qcircuit}[scale=0.5]
    \grid{4}{0,1}
    \gate{$T$}{1.25,0};
    \swapgate{2.75}{0}{1};
  \end{qcircuit}
}
\]

We will now focus on the ``non-obvious'' relations
{\eqref{eqn-rela}}--{\eqref{eqn-relc}}. Relations~{\eqref{eqn-rela}}
and {\eqref{eqn-relb}} are of the form
\begin{equation}\label{eqn-4}
    \scalebox{1}{\m{
 	\begin{qcircuit}[scale=0.5]
          \grid{6.00}{0,1}
          \controlled{\notgate}{1.00,0}{1};
          \gate{$A$}{2.25,0};
          \whitecontrolled{\notgate}{3.5,0}{1};
          \gate{$A^\dagger$}{4.75,0};
        \end{qcircuit}
    }}
    ~=~
    \scalebox{1}{\m{
 	\begin{qcircuit}[scale=0.5]
          \grid{6}{0,1}
          \gate{$A$}{1.25,0};
          \whitecontrolled{\notgate}{2.5,0}{1};
          \gate{$A^\dagger$}{3.75,0};
          \controlled{\notgate}{5,0}{1};
        \end{qcircuit}
    }}.
\end{equation}
They hold because positively controlled gates commute with negatively
controlled gates. Note that there are infinitely many relations of the
form {\eqref{eqn-4}}, where $A$ is any single-qubit Clifford+$T$
operator, but our completeness proof shows that, in the presence of
the remaining axioms, two of them are sufficient to prove all the
others.

Relation~{\eqref{eqn-relc}} is more interesting. It, too, states that
two operators commute, but it is less obvious why this is so. Ideally,
we would be able to find some simpler and more obvious relations that
imply {\eqref{eqn-relc}}. While we have not been able to find
such simpler relations in the Clifford+$T$ generators, we can do this
if we permit ourselves a controlled $T$-gate. Note that the controlled
$T$-gate is not itself a member of the 2-qubit Clifford+$T$ group,
since representing it as a Clifford+$T$ operator requires an ancilla
{\cite{GilesSelinger2013}}. But the use of controlled $T$-gates is
nevertheless helpful in explaining relation~{\eqref{eqn-relc}}. We
start by noting that the controlled $T$-gate satisfies the following
obvious circuit identities (and their upside-down versions):
\begin{eqnarray}
    \m{\begin{qcircuit}[scale=0.5]
        \grid{2.5}{0,1}
        \controlled{\gate{$T$}}{1.25,0}{1}
      \end{qcircuit}
    } 
    &=& 
    \m{\begin{qcircuit}[scale=0.5]
        \grid{2.5}{0,1}
        \controlled{\gate{$T$}}{1.25,1}{0}
      \end{qcircuit}
    }
    \label{eqn-5}
    \\\nonumber\\[0ex]
    \m{\begin{qcircuit}[scale=0.5]
        \grid{3.75}{0,1}
        \whitecontrolled{\gate{$T$}}{1.25,0}{1}
        \controlled{\gate{$T$}}{2.5,0}{1}
      \end{qcircuit}
    } 
    &=& 
    \m{\begin{qcircuit}[scale=0.5]
        \grid{2.5}{0,1}
        \gate{$T$}{1.25,0}
      \end{qcircuit}
    } 
    \label{eqn-6}
    \\\nonumber\\[0ex]
    \m{\begin{qcircuit}[scale=0.5]
        \grid{3.75}{0,1}
        \gate{$T$}{1.25,1}
        \whitecontrolled{\gate{$T$}}{2.5,0}{1}
      \end{qcircuit}
    } 
    &=& 
    \m{\begin{qcircuit}[scale=0.5]
        \grid{3.75}{0,1}
        \gate{$T$}{1.25,0}
        \whitecontrolled{\gate{$T$}}{2.5,1}{0}
      \end{qcircuit}
    }
    \label{eqn-7}
    \\\nonumber\\[0ex]
    \m{\begin{qcircuit}[scale=0.5]
        \grid{5}{0,1}
        \whitecontrolled{\gate{$H$}}{1.25,0}{1}
        \gate{$T$}{2.5,0}
        \controlled{\gate{$H$}}{3.75,0}{1}
      \end{qcircuit}
    } 
    &=& 
    \m{\begin{qcircuit}[scale=0.5]
        \grid{5}{0,1}
        \controlled{\gate{$T$}}{1.25,0}{1}
        \gate{$H$}{2.5,0}
        \whitecontrolled{\gate{$T$}}{3.75,0}{1}
      \end{qcircuit}
    }.
    \label{eqn-8}
\end{eqnarray}
Identities {\eqref{eqn-5}}--{\eqref{eqn-7}} are obvious because all of
the operators in them are diagonal. Identity {\eqref{eqn-8}} holds by
case distinction: this circuit applies either $HT$ or $TH$ to the
bottom qubit, depending on whether the top qubit is $\ket{0}$ or
$\ket{1}$. Using these identities, we can easily prove
{\eqref{eqn-relc}}:
\[
\begin{array}{rcl}
    \scalebox{1}{\m{
        \begin{qcircuit}[scale=0.5]
          \grid{10.00}{0,1}
          \whitecontrolled{\gate{$H$}}{1.25,0}{1};
          \gate{$T$}{2.75,0};
          \controlled{\gate{$H$}}{4.25,0}{1};
          \whitecontrolled{\gate{$H$}}{5.75,1}{0};
          \gate{$T$}{7.25,1};
          \controlled{\gate{$H$}}{8.75,1}{0};
        \end{qcircuit}
    }} &\stackrel{\eqref{eqn-8}}{=}& \scalebox{1}{\m{
        \begin{qcircuit}[scale=0.5]
          \grid{10.00}{0,1}
          \controlled{\gate{$T$}}{1.25,0}{1};
          \gate{$H$}{2.75,0};
          \whitecontrolled{\gate{$T$}}{4.25,0}{1};
          \controlled{\gate{$T$}}{5.75,1}{0};
          \gate{$H$}{7.25,1};
          \whitecontrolled{\gate{$T$}}{8.75,1}{0};
        \end{qcircuit}
    }} \\\\[-1.5ex]&\stackrel{\eqref{eqn-5}}{=}& \scalebox{1}{\m{
        \begin{qcircuit}[scale=0.5]
          \grid{10.00}{0,1}
          \controlled{\gate{$T$}}{1.25,0}{1};
          \gate{$H$}{2.75,0};
          \whitecontrolled{\gate{$T$}}{4.25,0}{1};
          \controlled{\gate{$T$}}{5.75,0}{1};
          \gate{$H$}{7.25,1};
          \whitecontrolled{\gate{$T$}}{8.75,1}{0};
        \end{qcircuit}
    }} \\\\[-1.5ex]&\stackrel{\eqref{eqn-6}}{=}& \scalebox{1}{\m{
        \begin{qcircuit}[scale=0.5]
          \grid{10.00}{0,1}
          \controlled{\gate{$T$}}{1.25,0}{1};
          \gate{$H$}{2.75,0};
          \gate{$T$}{5,0};
          \gate{$H$}{7.25,1};
          \whitecontrolled{\gate{$T$}}{8.75,1}{0};
        \end{qcircuit}
    }} \\\\[-1.5ex]&=& \scalebox{1}{\m{
        \begin{qcircuit}[scale=0.5]
          \grid{10.00}{0,1}
          \controlled{\gate{$T$}}{1.25,0}{1};
          \gate{$H$}{4.25,0};
          \gate{$T$}{7.25,0};
          \gate{$H$}{4.25,1};
          \whitecontrolled{\gate{$T$}}{8.75,1}{0};
        \end{qcircuit}
    }} \\\\[-1.5ex]&\stackrel{\eqref{eqn-5},\eqref{eqn-7}}{=}& \scalebox{1}{\m{
        \begin{qcircuit}[scale=0.5]
          \grid{10.00}{0,1}
          \controlled{\gate{$T$}}{1.25,1}{0};
          \gate{$H$}{4.25,1};
          \gate{$T$}{7.25,1};
          \gate{$H$}{4.25,0};
          \whitecontrolled{\gate{$T$}}{8.75,0}{1};
        \end{qcircuit}
    }} \\\\[-1.5ex]&=& \scalebox{1}{\m{
        \begin{qcircuit}[scale=0.5]
          \grid{10.00}{0,1}
          \controlled{\gate{$T$}}{1.25,1}{0};
          \gate{$H$}{2.75,1};
          \gate{$T$}{5,1};
          \gate{$H$}{7.25,0};
          \whitecontrolled{\gate{$T$}}{8.75,0}{1};
        \end{qcircuit}
    }} \\\\[-1.5ex]&\stackrel{\eqref{eqn-6}}{=}& \scalebox{1}{\m{
        \begin{qcircuit}[scale=0.5]
          \grid{10.00}{0,1}
          \controlled{\gate{$T$}}{1.25,1}{0};
          \gate{$H$}{2.75,1};
          \gate{$H$}{7.255,0};
          \whitecontrolled{\gate{$T$}}{4.25,1}{0};
          \controlled{\gate{$T$}}{5.75,1}{0};
          \whitecontrolled{\gate{$T$}}{8.75,0}{1};
        \end{qcircuit}
    }} \\\\[-1.5ex]&\stackrel{\eqref{eqn-5}}{=}& \scalebox{1}{\m{
        \begin{qcircuit}[scale=0.5]
          \grid{10.00}{0,1}
          \controlled{\gate{$T$}}{1.25,1}{0};
          \gate{$H$}{2.75,1};
          \gate{$H$}{7.255,0};
          \whitecontrolled{\gate{$T$}}{4.25,1}{0};
          \controlled{\gate{$T$}}{5.75,0}{1};
          \whitecontrolled{\gate{$T$}}{8.75,0}{1};
        \end{qcircuit}
    }} \\\\[-1.5ex] &\stackrel{\eqref{eqn-8}}{=}& \scalebox{1}{\m{
        \begin{qcircuit}[scale=0.5]
          \grid{10.00}{0,1}
          \whitecontrolled{\gate{$H$}}{1.25,1}{0};
          \gate{$T$}{2.75,1};
          \controlled{\gate{$H$}}{4.25,1}{0};
          \whitecontrolled{\gate{$H$}}{5.75,0}{1};
          \gate{$T$}{7.25,0};
          \controlled{\gate{$H$}}{8.75,0}{1};
        \end{qcircuit}
    }}.
\end{array}
\]
Note that there is again an infinite family of such relations, because
in the above derivation, we could have used any gate in place of $H$.
However, due to completeness, all other such relations are
consequences of {\eqref{eqn-rela}}--{\eqref{eqn-relc}} and the
remaining axioms.

Another way to look at
relations~{\eqref{eqn-rela}}--{\eqref{eqn-relc}} is in terms of their
Pauli rotation representations. As we already mentioned in
Section~\ref{ssec:pauli}, up to basis changes, the three relations can be
written in terms of Pauli rotations, respectively as follows:
\[
  \begin{array}{rcl}
    \Rsyll{IX}\Rsyll{IZ}\Rsyll{ZZ}\Rsyll{ZX} &=& 
    \Rsyll{ZX}\Rsyll{IZ}\Rsyll{ZZ}\Rsyll{IX}, \\
    \Rsyll{IX}\Rsyll{IZ}\Rsyll{IX}\Rsyll{ZX}\Rsyll{ZZ}\Rsyll{ZX} &=& 
    \Rsyll{ZX}\Rsyll{IZ}\Rsyll{IX}\Rsyll{ZX}\Rsyll{ZZ}\Rsyll{IX}, \\
    \Rsyll{XY}\Rsyll{YZ}\Rsyll{XZ}\Rsyll{IX}\Rsyll{ZI}
    \Rsyll{YX}\Rsyll{ZY}\Rsyll{ZX}\Rsyll{XI}\Rsyll{IZ} &=& 
    \Rsyll{YX}\Rsyll{ZY}\Rsyll{ZX}\Rsyll{XI}\Rsyll{IZ}
    \Rsyll{XY}\Rsyll{YZ}\Rsyll{XZ}\Rsyll{IX}\Rsyll{ZI}.
  \end{array}
\]
When written in this form, the first two of these relations only use
$X$ and $Z$ Paulis, and use only $Z$ on the left qubit. This indicates
that these relations are about controlled gates. We can also see that
in both cases, the relation exchanges the positions of the leftmost
$\Rsyll{IX}$ and the rightmost $\Rsyll{ZX}$. The first relation can
also be seen to express the fact that $\Rsyll{IZ}\Rsyll{ZZ}$ commutes
with $\Rsyll{ZX}\Rsyll{IX}\inv$, and similarly for the second
relation.  The third relation again takes the form of an operator
commuting with its upside-down version.

\section{Conclusion and future work}
\label{sec:conclusion}
We gave a presentation of the 2-qubit Clifford+$T$ group by generators
and relations. We did this by applying the Reidemeister-Schreier
theorem to Greylyn's presentation of the group of unitary
$4\times 4$-matrices over the ring $\Z[\frac{1}{\sqrt{2}},i]$. Since
there is a very large number of relations to check and simplify, and
checking them by hand or by an unverified computer program would be
error-prone, we used the proof assistant Agda to formalize our
proof. The latter process is painstaking and took us more than 5 years
to complete after our result was first announced in {\cite{BS2015}}.

An obvious candidate for future work would be to find a complete set
of relations for the Clifford+$T$ group with 3 or more qubits. This is
currently out of reach for two reasons: first, the computations
required to simplify any potential set of relations will be even more
labor-intensive than in the 2-qubit case. Second, and more seriously,
there is no known presentation of the group of unitary
$n\times n$-matrices over the ring $\Z[\frac{1}{\sqrt{2}},i]$ for
$n>4$.

Another project that is currently in progress is to apply the method
of this paper to restrictions of the Clifford+$T$ group for which
presentations of the corresponding matrix group are known. This
includes the Clifford+Toffoli gate set and the Clifford+controlled-$S$
gate set.

\bibliographystyle{eptcs}
\bibliography{cliffordt2}

\end{document}